\theoremstyle{plain}
\newtheorem{theorem}{Theorem}[section]
\newtheorem{lemma}[theorem]{Lemma}
\newtheorem{corollary}[theorem]{Corollary}
\newtheorem{definition}[theorem]{Definition}
\newcommand{\ii}{\mathrm{i}}
\newcommand{\e}{\mathrm{e}}
\begin{document}

\AtEndDocument{%
  \par
  \medskip
  \begin{tabular}{@{}l@{}}%
    \textsc{Bruno Chagas}\\
    \textsc{Irish Centre for High-End Computing in Dublin}\\ \textsc{National University of Ireland, Galway, Ireland}\\
    \textit{E-mail address}: \texttt{bruno.chagas@ichec.ie} \\ \ \\
    \textsc{Rodrigo Chaves}\\
    \textsc{Department of Computer Science,} \\ \textsc{Universidade Federal de Minas Gerais, Belo Horizonte, Brazil}\\
    \textit{E-mail address}: \texttt{rodchaves@ufmg.br} \\ \ \\
    \textsc{Gabriel Coutinho}\\
    \textsc{Department of Computer Science,} \\ \textsc{Universidade Federal de Minas Gerais, Belo Horizonte, Brazil}\\
    \textit{E-mail address}: \texttt{gabriel@dcc.ufmg.br}
  \end{tabular}}

\title{Why and how to add direction to a quantum walk}

\author{Rodrigo Chaves, Bruno Oliveira Chagas, Gabriel Coutinho}

\date{\today}

\maketitle

\begin{abstract}
We formalize the treatment of directed (or chiral) quantum walks using Hermitian adjacency matrices, bridging two developing fields of research in quantum information and spectral graph theory. We display results and simulations which highlight the conceptual differences between having directions encoded in the Hamiltonians or not. This leads to a construction of a new type of quantum phenomenon: zero transfer between pairs of sites in a connected coupled network, which is only possible in the directed model we study. Our main result is a description of several families of directed cycles that admit zero transfer.
\end{abstract}

\section{\label{sec:intro}Introduction}

The interplay between graph theory and quantum information has been widely investigated \cite{Aharonov2002, Childs2002, Aharonov1993, Cabello2014, Amaral2018}. In the context of quantum walks, this is caused by the very natural way in which a graph models the quantum system defined by Hamiltonians which are symmetric and real and encode couplings between qubits \cite{Christandl2004,Christandl2005}. The amount of works devoted to understand this connection is proportional to the extensive existing literature of spectral graph theory, and most of which is devoted to study of how the spectral properties of the adjacency matrix or versions of the Laplacian matrix for undirected graphs relate to quantum properties \cite{Godsil2011,Godsil2012,Alvir2016}.

Recently there has been a surged interest in using Hermitian complex matrices, which very conveniently provide models for arc direction in directed graphs \cite{Mohar2016}. From the graph theoretic point of view, several works have focused on studying how the combinatorics of arc direction is manifested in spectral properties of Hermitian adjacency matrices \cite{Mohar2017}. This motivates the seek of which quantum information phenomenon can be achieved in this more general setting of Hermitian quantum walks which in contrast are not available for the restricted undirected model \cite{Lato2020,Cameron2014}. 

Notably, the possibility of setting up a system in which a qubit state hops from one site to any other at different times with perfect fidelity and with no time-dependent control on the Hamiltonian can only be achieved with Hermitian complex Hamiltonians \cite{Connelly2017}. 

Achieving state transfer over large distances in finite qubit networks has also been a task pursued since early works \cite{Bose2003,Bose2007}. This has been found possible with real symmetric Hamiltonians only upon modulating the couplings with high energy \cite{KayReviewPST}, and unfortunately allowing for complex weightings is not helpful, as we verify in Section \ref{sec:ctqw2}.

Investigations of directed quantum walks on chains (paths) and rings (cycles) have appeared recently \cite{Zimboras2013, Sett2019}; special attention has been paid to the possibility of shielding part of the network to the state evolution initiated somewhere else. This is our main result: we show in Section \ref{sec:ztrans} how to achieve zero transfer at all times between antipodal pairs of sites in ring networks upon adding certain complex weights. This generalizes known examples in the literature, and displays an interesting connection between some old theorems, opening avenues for future exploration.

The paper is organized as follows. We begin our treatment formalizing our Hamiltonian model in Section \ref{sec:ctqw2}, showing how it decomposes into invariant subspaces corresponding to the $k$-excitation subspaces, and how the block corresponding to the $1$-excitation subspace is precisely the Hermitian adjacency matrix that appears in Christandl \textit{et al.}\cite{Mohar2017}. We also add simulations with interesting results for the complete graphs (all qubits coupled). Then, we introduce and show the result mentioned above: zero transfer in cycles in Section \ref{sec:ztrans}.

\section{\label{sec:ctqw2} Hamiltonian for directed quantum walks}

\subsection{Setting}

A set of $n$ qubits corresponds to $\mathbb{C}^{2^n}$ --- a Hamiltonian $H$ is a self-adjoint operator acting on this space, and according to Schrödinger's equation, the state of the system at time $t$, denoted by $\ket{\psi(t)}$, evolves as governed by the differential equation
\begin{equation}
\mathrm{i} \hbar \frac{\partial}{\partial t}\ket{\psi(t)} = H \ket{\psi(t)}.
\end{equation}
Solving this equation for a constant $H$ and initial state $\ket{\psi(0)}$, and throwing real constants into the parameter $t$, we obtain the solution 
\begin{equation} \label{eq:schro}
    \ket{\psi(t)} = e^{\mathrm{i} Ht}\ket{\psi(0)}.
\end{equation}
Certain choices for $H$ allow for a block decomposition in which each block corresponds to the subspace spanned by the global states in which precisely $k$ qubits are at $\ket 1$, and the remaining at $\ket 0$. Thus there are $n+1$ subspaces, one for each $k \in \{0,\cdots,n\}$, each of dimension $\binom{n}{k}$ \cite{Osborne2006}. If $H$ is the $XY$-Hamiltonian (apparently also called $XX$ in some texts), the block corresponding to $k=1$ coincides with the adjacency matrix of the underlying graph that describes the couplings. If $H$ is the Heisenberg Hamiltonian, one observes the Laplacian matrix of the underlying graph in the same $k=1$ block \cite{Christandl2005}.

The $XY$-Hamiltonian is defined in terms of  two-body interactions determined by the edge set of the graph. Given a graph $G = (V,E)$, defined by the vertex set $V$ and edge set $E$, the vertices correspond to qubits, and we can write the Hamiltonian matrix as
\[
	H = \frac{1}{2} \sum_{ab \in E(G)} X_aX_b + Y_aY_b,
\]
where $X_k$ corresponds to the operator which acts as the Pauli matrix $X$ onto the qubit in position $k$, and analogously for $Y_k$ and the Pauli matrix $Y$. It is possible to input real weights multiplying $(X_aX_b + Y_aY_b)$ and still obtain the block decomposition described above, with the Hamiltonian corresponding to the $1$-excitation subspace now being the real symmetric weighted adjacency matrix of a graph. 

\subsection{Encoding direction}

A directed graph can be conveniently represented by means of an Hermitian adjacency matrix. For example, if there is a directed arc from $a$ to $b$, the $(a,b)$ entry can be set to $\ii$ and therefore the $(b,a)$ entry to $-\ii$. In this model it is convenient to replace a pair of arcs of opposing direction between a pair of vertices by an undirected edge of weight $+1$ in the Hermitian adjacency matrix. If instead of $\ii$, one chooses $\e^{\ii \pi/3}$ to encode direction, then the weight for an undirected edge is exactly the sum of the weights of a pair of opposing direction arcs. This convenient fact was explored in \cite{Mohar2017}, in a context unrelated to quantum walks.

Graphs for which orientation of an edge is encoded in the adjacency matrix upon the use of complex numbers of norm 1 are also known as complex unit gain graphs. 

A natural question at this point is whether there is a Hamiltonian model (preferably defined in terms of $2$-body interactions only) whose action onto the $1$-excitation subspace corresponds precisely to a general Hermitian adjacency matrix. The answer is affirmative (see for instance \cite{Zimboras2013}). We add a proof for reference.

\begin{theorem}
	Let $M = ((m_{ab}))$ be an Hermitian matrix, with rows and columns indexed by the vertex set of a graph on $n$ vertices. Then $M$ is a block of the $2^n \times 2^n$ matrix $H$, defined as
	\begin{align*}
	H & = \frac{1}{2} \sum_{a \neq b} \Re(m_{ab}) (X_aX_b + Y_aY_b) + \Im(m_{ab}) (X_aY_b - X_bY_a)  \\
	& + \frac{1}{2} \sum_{a} m_{aa} (I-Z_a) 
	\end{align*}
	Moreover, $M$ corresponds to the action of $H$ onto the subspace spanned by ${\ket a = \ket{0 \cdots 0 1 0 \cdots 0}}$, where the $1$ appears in the $a$th position, for all $a \in V(G)$. 
\end{theorem}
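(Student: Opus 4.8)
The plan is to verify directly that the compression of $H$ to the $1$-excitation subspace $W_1 = \operatorname{span}\{\ket a : a \in V(G)\}$ equals $M$, by computing the entries $\bra a H \ket b$. Before the entries make sense as a ``block'' I would first record two structural facts. First, $H$ is self-adjoint: every summand is a real scalar times a tensor product of Pauli matrices supported on one or two qubits, and each such product ($X_cX_d$, $Y_cY_d$, $X_cY_d$, $X_dY_c$, $I$, $Z_c$) is Hermitian, so the real linear combination $H$ is Hermitian. Second, $W_1$ is $H$-invariant: each summand commutes with the total excitation number $N = \tfrac12\sum_c (I - Z_c)$, since the two-body terms only exchange a $\ket0$ and a $\ket1$ between sites $c$ and $d$ (annihilating $\ket{00}$ and $\ket{11}$) and the diagonal term is a function of the $Z_c$. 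This is exactly the block decomposition already recalled in Section~\ref{sec:ctqw2}, and it guarantees that $H|_{W_1}$ is a well-defined Hermitian matrix indexed by $V(G)$.

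For the actual entries I would pass to the raising and lowering operators $\sigma^+_c = \ket0\bra1$ and $\sigma^-_c = \ket1\bra0$ acting on qubit $c$, for which $X_c = \sigma^+_c + \sigma^-_c$ and $Y_c = \ii(\sigma^-_c - \sigma^+_c)$. A short expansion then gives the two identities
\begin{align*}
\tfrac12\left(X_aX_b + Y_aY_b\right) &= \sigma^+_a\sigma^-_b + \sigma^-_a\sigma^+_b, \\
\tfrac12\left(X_aY_b - X_bY_a\right) &= \ii\left(\sigma^+_a\sigma^-_b - \sigma^-_a\sigma^+_b\right),
\end{align*}
so that the contribution of the pair $\{a,b\}$ to $H$ collapses to $m_{ab}\,\sigma^+_a\sigma^-_b + \overline{m_{ab}}\,\sigma^-_a\sigma^+_b$, using $\Re(m_{ab}) + \ii\,\Im(m_{ab}) = m_{ab}$ and $\Re(m_{ab}) - \ii\,\Im(m_{ab}) = \overline{m_{ab}}$. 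Since $\sigma^+_a\sigma^-_b$ transports a single excitation from $a$ to $b$ and annihilates every other $1$-excitation basis vector (and symmetrically for $\sigma^-_a\sigma^+_b$), reading this operator on $W_1$ produces exactly the off-diagonal entries $m_{ab}$ and $m_{ba} = \overline{m_{ab}}$, consistent with $M$ being Hermitian. For the diagonal I would note that the hopping terms send $\ket a$ to a state supported away from $\ket a$ (or to $0$), so they contribute nothing, while $\tfrac12 m_{aa}(I - Z_a)\ket a = m_{aa}\ket a$ because $(I - Z_a)$ acts as $2$ on the excited qubit $a$ and as $0$ on every site in state $\ket0$. Collecting the two cases identifies $H|_{W_1}$ with $M$.

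The step I expect to be delicate is purely the bookkeeping of normalizations and signs, not anything conceptual. One must fix whether $\sum_{a\neq b}$ runs over ordered or unordered pairs and check that the prefactor $\tfrac12$ together with the matrix element $2$ carried by the elementary hopping operator $X_aX_b + Y_aY_b$ conspire to give coefficient exactly $1$; and one must track the sign of the imaginary part so that the orientation encoded by $\ii$ (rather than $-\ii$) yields $m_{ab}$ and not its conjugate $m_{ba}$. Reversing that single convention amounts to reversing all arc orientations, i.e.\ replacing $M$ by $\overline M$, so the content of the theorem is insensitive to it; the only genuine work is to carry these factors consistently through the computation above.
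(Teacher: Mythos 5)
Your proposal is correct and is essentially the paper's own argument in different notation: your ladder-operator identities for $\tfrac12(X_aX_b+Y_aY_b)$ and $\tfrac12(X_aY_b-X_bY_a)$ are exactly the operator form of the basis-state computations the paper performs on the vectors $\ket{S}$, $S \subseteq V(G)$, and the extraction of the entries $m_{ab}$ and $m_{aa}$ then proceeds identically. The only (harmless) differences are that you obtain invariance of the $1$-excitation subspace from commutation with the number operator $\tfrac12\sum_c (I-Z_c)$ rather than by inspecting $H\ket{S}$ directly for every $S$, and that you explicitly flag the ordered-versus-unordered reading of $\sum_{a\neq b}$ --- a convention the paper leaves implicit, but which is indeed what makes the off-diagonal coefficients come out as exactly $m_{ab}$ rather than $2m_{ab}$.
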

\begin{proof}
	Assume $U \subseteq V(G)$, and by an abuse of notation, let $U \in \{0,1\}^{V(G)}$ also denote the strings of $0$s and $1$s that identifies the vertices in $U$. Fix a subset $S \subseteq V(G)$. Note that 
	\begin{align}
		& \frac{1}{2}(X_aX_b + Y_aY_b) \ket{S} = 
		\left\{
		\begin{array}{ll}
		\ket{S \oplus \{a,b\} } & \text{ if } |S \cap \{a,b\}| = 1, \\ 
		      0 & \text{ otherwise.}
		  \end{array} \right. \\
		& \frac{1}{2}(X_aY_b - X_bY_a) \ket{S} = \left\{
		\begin{array}{ll}
		\ii \ket{S \oplus \{a,b\} } & \text{ if } a \in S, b \notin S, \\ 
		-\ii \ket{S \oplus \{a,b\} } & \text{ if } a \notin S, b \in S, \\
		      0 & \text{ otherwise.}
		  \end{array} \right.\\
		& \frac{1}{2}(I-Z_a) \ket{S} = \left\{
		\begin{array}{ll}
		\ket{S} & \text{ if } a \in S, \\ 
		      0 & \text{ otherwise.}
		  \end{array} \right.
	\end{align}
	It is immediate to verify that $H \ket {S}$ is a linear combination of $\{ \ket{U} : U \subseteq V(G), |U| = |S|\}$, and that $M$ represents the action of $H$ over the subspace determined by the subsets of size $1$.
\end{proof}

The theorem above guarantees that our model is physical, and can be constructed upon the use of $2$-site interactions. From here on, we shall assume the underlying directed graph of the network has been given along with a function $\alpha : E \to [0,2\pi)$ so that the Hermitian adjacency matrix we consider is
\begin{equation}\label{eq:alphaadj}
H_{\alpha} = \sum_{(a,b) \in E(G)} \e^{\ii \alpha(a,b)}\ket{a}\bra{b} + \e^{-\ii \alpha(b,a)}\ket{b}\bra{a}, 
\end{equation}
where we impose the constraint $\alpha(a,b)=\alpha(b,a)$.

\subsection{Constant weight}

Extensive work has been done for when the directed arcs are all encoded with constant complex weight equal to $\ii$ (see for instance \cite{Cameron2014,Connelly2017,godsil2020perfect}). We show at least one more general case that can be reduced to this. 

By considering a fixed $\alpha$ to every edge defined in \eqref{eq:alphaadj}, we can split the Hamiltonian into two parts as
\begin{equation}
H_{\alpha} = \e^{\ii \alpha} B + \e^{-\ii \alpha} B^T,
\end{equation}
where $B^{V(G) \times V(G)}$ consists of a $01$ matrix with $1$ in position $(a,b)$ for every arc $(a,b) \in E(G)$, and $0$ elsewhere, including in position $(b,a)$. We may expand $\e^{\ii\alpha}$ to find
\begin{equation}
H_{\alpha} = \cos(\alpha)[B+B^T] + \sin(\alpha)[\ii \,(B-B^T)].
\end{equation}
Note in particular that $[B+B^T]$ is the adjacency matrix of the underlying undirected graph, whereas $[\ii \,(B-B^T)]$ is precisely equal to the Hermitian adjacency matrix for which constant weight $\ii$ has been chosen to encode direction. Even though the Hamiltonian is so nicely decomposed, the transition matrix of the quantum walk might not, as generally $B$ and $B^T$ do not commute. If they do, however, which corresponds to the case where $B$ is a normal matrix, the transition matrix of the quantum walk, as in \eqref{eq:schro}, will be given as
\[
	\e^{\ii t H_{\alpha}} = \e^{\ii t \cos(\alpha)[B+B^T]}\e^{\ii t \sin(\alpha)[\ii \,(B-B^T)]},
\]
where each factor in the product is the transition matrix of a quantum walk on a suitable (Hermitian) Hamiltonian. 

Standard examples of normal matrices $B$ that will provide interesting cases are the sums of circulant matrices (which can all be simultaneously diagonalizable with Fourier coefficients). For instance, if $G$ is a complete graph (all vertices are connected) on $n$ vertices, with $n$ odd, it is possible to write
\[
	A(G) = \sum_i B_i + B_i^T,
\]
where each $B_i$ is a circulant matrix. Therefore certain orientations of the edges of the complete graph given by weights $\e^{\ii \alpha}$ can be cast into the above framework.

In the next subsection, we display an interesting phenomenon for when orientation of the edges of the complete graph cannot be interpreted as such, suggesting that freedom in choosing the weights $\alpha$ might lead to interesting cases.

\subsection{A first example: quantum walk on a complete graph}

We show how the presence of weights changes the transition probability in complete graphs. Let $K_n$ denote the complete graph on vertex set $\{1,...,n\}$. We use $\ket i$ to denote the indicator function of vertex $i$. It corresponds to the global state that assigns some state to the qubit at $i$, and orthogonal states to the remaining qubits. Assume all edges of the graph have been oriented from $i \to j$ if $i < j$, and this weight is given by $\e^{\ii \alpha}$. Thus the Hamiltonian is
\[
	H = H_\alpha = \sum_{i < j} \e^{\ii \alpha}\ket{i}\bra{j} + \e^{-\ii \alpha}\ket{j}\bra{i} =  \e^{\ii \alpha} B + \e^{-\ii \alpha} B^T, 
\]
where $B$ is the all ones upper triangular matrix (with $0$ diagonal). The graph in Figure \ref{fig:k4walk} depicts the value of
\[
	P_{0\to 1}(t) = |\bra{1} \e^{\ii t H} \ket{0}|^2,
\]
for increasing $t$, considering the cliques $K_4$ and $K_6$, and different choices of $\alpha$, where $2\pi$ is the non-oriented model.

\begin{figure}[!h]
\centering
\includegraphics[width=10cm]{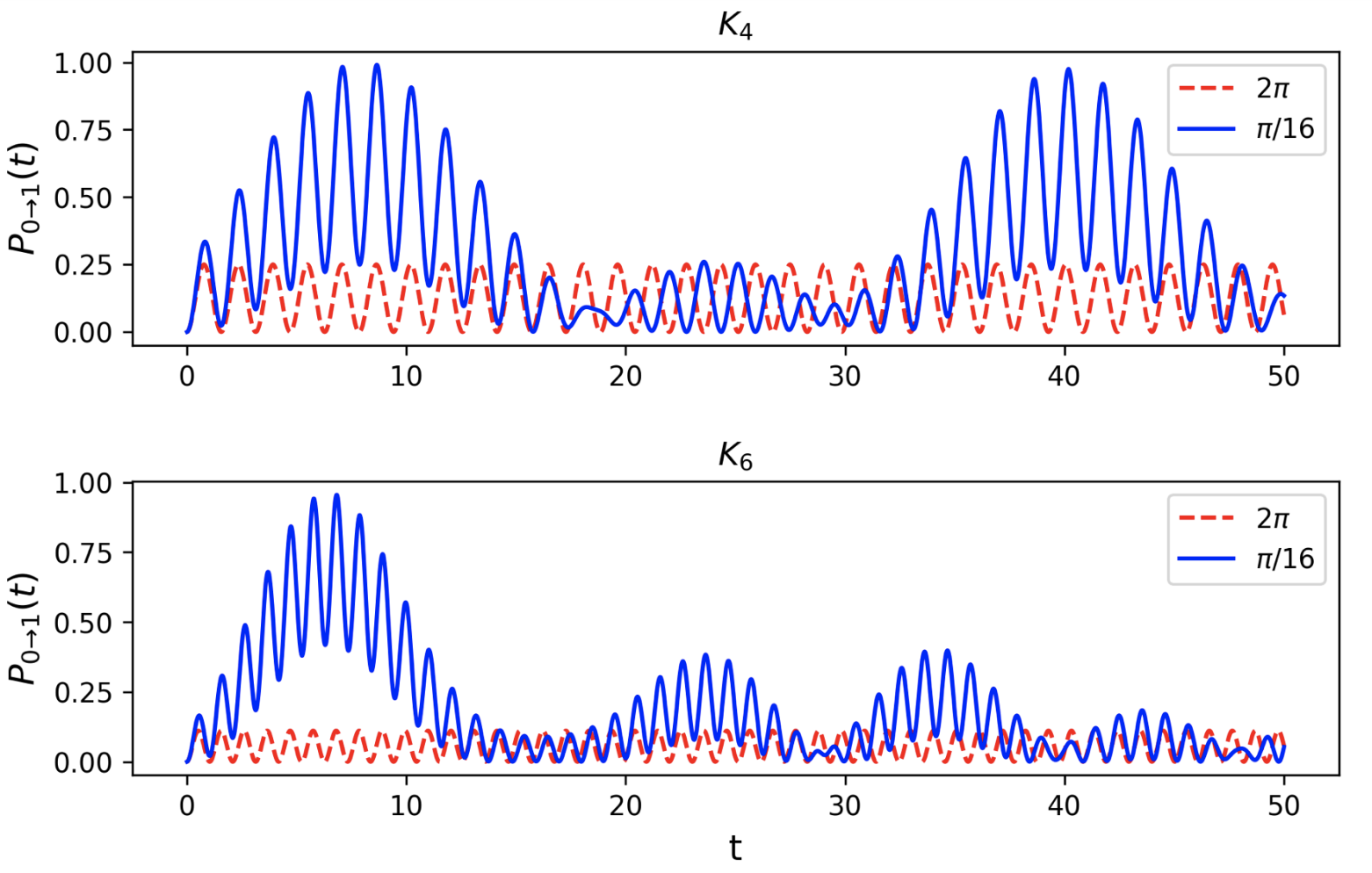}
\caption{Dynamics of a continuous-time quantum walk on $K_6$ with different values of $\alpha$.}
\label{fig:k4walk}
\end{figure}

This suggests that orientation can be a useful tool in increasing the probability of state transfer.

\section{Orientations on trees}

The typical example when studying quantum walks is for the Hamiltonian to be defined based on a linear chain of nearest neighbour interacting qubits. More generally, for the purposes of this section, we assume the underlying graph is a tree (an acyclic connected graph). We show below that adding arbitrary orientations given by complex numbers of absolute value $1$ affects the transition matrix of the quantum walk in a very predictable manner (in explicit contrast to what occurs if weights put to edges have absolute value different than $1$).

The following lemma is elementary, but its conclusion is relevant to our context.

\begin{lemma}\label{lemmatree}
    Let $H_{\alpha}$ be the adjacency matrix of an oriented tree $T$ on $n$ vertices, where each arc $(a,b)$ has received weight $\e^{\ii \alpha(a,b)}$ (recall that the $(b,a)$ of $H_{\alpha}$ is thus equal to $\e^{-\ii \alpha(a,b)}$). Then, there is a diagonal matrix $D$, that obeys $D^\dagger D=I$, so that
    \begin{equation}
        D^\dagger H_{\alpha}D = H_0,
    \end{equation}
    where $H_0$ is the adjacency matrix of an undirected underlying tree.
\end{lemma}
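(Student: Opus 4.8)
The plan is to exploit the fact that a tree has no cycles, which means there is no obstruction to ``gauging away'' the phases. Concretely, I want to find unit-modulus numbers $d_1,\dots,d_n$ (the diagonal entries of $D$, so that $D = \operatorname{diag}(d_1,\dots,d_n)$ and $D^\dagger D = I$ automatically) such that conjugating $H_\alpha$ by $D$ removes every phase. Computing the $(a,b)$ entry of $D^\dagger H_\alpha D$ gives $\overline{d_a}\, (H_\alpha)_{ab}\, d_b$, so for an arc $(a,b)$ with weight $\e^{\ii\alpha(a,b)}$ I need
\begin{equation}
    \overline{d_a}\, \e^{\ii \alpha(a,b)}\, d_b = 1,
\end{equation}
and the constraint on the reverse entry is then automatically satisfied because $H_\alpha$ and $H_0$ are both Hermitian (so fixing the $(a,b)$ entry to $1$ forces the $(b,a)$ entry to $1$ as well).

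The key step is to construct the $d_v$ by propagating outward from an arbitrarily chosen root. First I would root $T$ at any vertex $r$ and set $d_r = 1$. Then I would process the remaining vertices in order of increasing distance from $r$ (a breadth-first or depth-first traversal): whenever I reach a vertex $b$ across the unique tree edge from its already-processed parent $a$, I define $d_b$ by the rule forced by the equation above. Writing $d_v = \e^{\ii\theta_v}$, the condition $\overline{d_a}\,\e^{\ii\alpha(a,b)}\,d_b = 1$ becomes $\theta_b = \theta_a - \alpha(a,b)$ (interpreting $\alpha$ with the correct sign along the direction of the arc), which determines $\theta_b$ uniquely from $\theta_a$. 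Since $T$ is a tree, each non-root vertex has exactly one parent, so there is never a conflicting requirement on any $d_v$, and the construction terminates after assigning every vertex its phase.

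Finally I would verify that this $D$ works on \emph{every} edge, not merely on the tree edges used during the traversal — but since $T$ is a tree, every edge \emph{is} a parent--child edge in the rooted orientation, so the verification is exhausted by the construction itself. Each $|d_v| = 1$ gives $D^\dagger D = I$, and by construction every off-diagonal entry of $D^\dagger H_\alpha D$ corresponding to an edge equals $1$, with all other entries $0$; hence $D^\dagger H_\alpha D = H_0$ is the adjacency matrix of the undirected tree.

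I do not expect a genuine obstacle here, which is why the lemma is called elementary: the whole point is that the first Betti number of a tree is zero, so the system of phase equations is always consistent. The only thing to be careful about is bookkeeping of signs — making sure that the convention $\alpha(a,b) = \alpha(b,a)$ together with the Hermitian structure is respected so that fixing the forward entry to $1$ really does fix the backward entry to $1$. The closest thing to a ``hard part'' is simply stating cleanly why the absence of cycles guarantees no inconsistency; this is where one would contrast with the cyclic case, in which traversing a cycle imposes a nontrivial product-of-phases (holonomy) condition that generically cannot be gauged away — precisely the phenomenon exploited later for zero transfer.
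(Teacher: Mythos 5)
Your proposal is correct and takes essentially the same approach as the paper: the paper organizes the phase propagation as an induction on the number of vertices, peeling off a leaf $a$ adjacent to $b$ and setting its diagonal entry to $(\overline{E_{bb}}\cdot \e^{\ii\alpha(a,b)})^{-1}$, which is exactly your parent--child rule $\theta_b = \theta_a - \alpha(a,b)$ read in reverse order. Both arguments rest on the same point you emphasize --- acyclicity means each new diagonal entry is forced by exactly one already-fixed neighbor, so the system of phase equations is always consistent.
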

\begin{proof}
	The proof goes by induction on the number of vertices. The base case for a tree on $2$ vertices is trivial. Assume $a$ is a leaf of $T$, connected to $b$ (say by an arc $(a,b)$). Let $E$ be the $(n-1)\times (n-1)$ diagonal matrix that gives $E^\dagger H_{\alpha}(T -v) E = H_0(T-v)$. Let $D$ be obtained from $E$ upon appending one diagonal entry corresponding to vertex $a$ of $T$, so that 
	\[
		D_{aa} = (\overline{E_{bb}} \cdot \e^{\ii \alpha(a,b)})^{-1}.
	\] 
	It is immediate to verify that $|D_{aa}| = 1$, and that $D^\dagger H_{\alpha}(T) D = H_0(T)$.
\end{proof}

As a consequence, we have that
\[
	\e^{\ii t H_\alpha} = D \, \e^{\ii t H_0}\, D^\dagger.
\]
This shows that if the initial state of the walk is of the form $\ket{a}$ for some vertex $a$ of the tree, then adding orientations does not affect the probabilities that this state is observed elsewhere in the tree after a given time.

Kubota \textit{et al.}\cite{Kubota2021} essentially showed that when $\alpha$ is constant, the matrices are similar (though they did not explicitly used diagonal similarity, which leads to equivalence of walks). The proof we present above is different from theirs, but we believe the method they use to analyze graphs with a given size of shortest cycle should also work for when $\alpha$ is possibly non-constant.

\section{Quantum walk on a cycle}

\subsection{Setting}
Let us consider the case where $\alpha$ assumes the same value to each arc and the conjugate to another. The hamiltonian of cycle, based on its adjacency matrix, is defined by
\begin{eqnarray}
H_{\alpha} = \sum_{x = 0}^{N-1} & &e^{i\alpha(x+1,x)}\ket{x+1}\bra{x} + e^{-i\alpha(x,x+1)}\ket{x}\bra{x+1}
\end{eqnarray}
as we can see its representation in figure \ref{fig:oriented_cycle}, where we consider a cyclic boundary condition by performing addition modulo $N$, and $\alpha(u,v) = \alpha(v,u)$
\begin{figure}[!h]
\centering
\includegraphics[width=6cm]{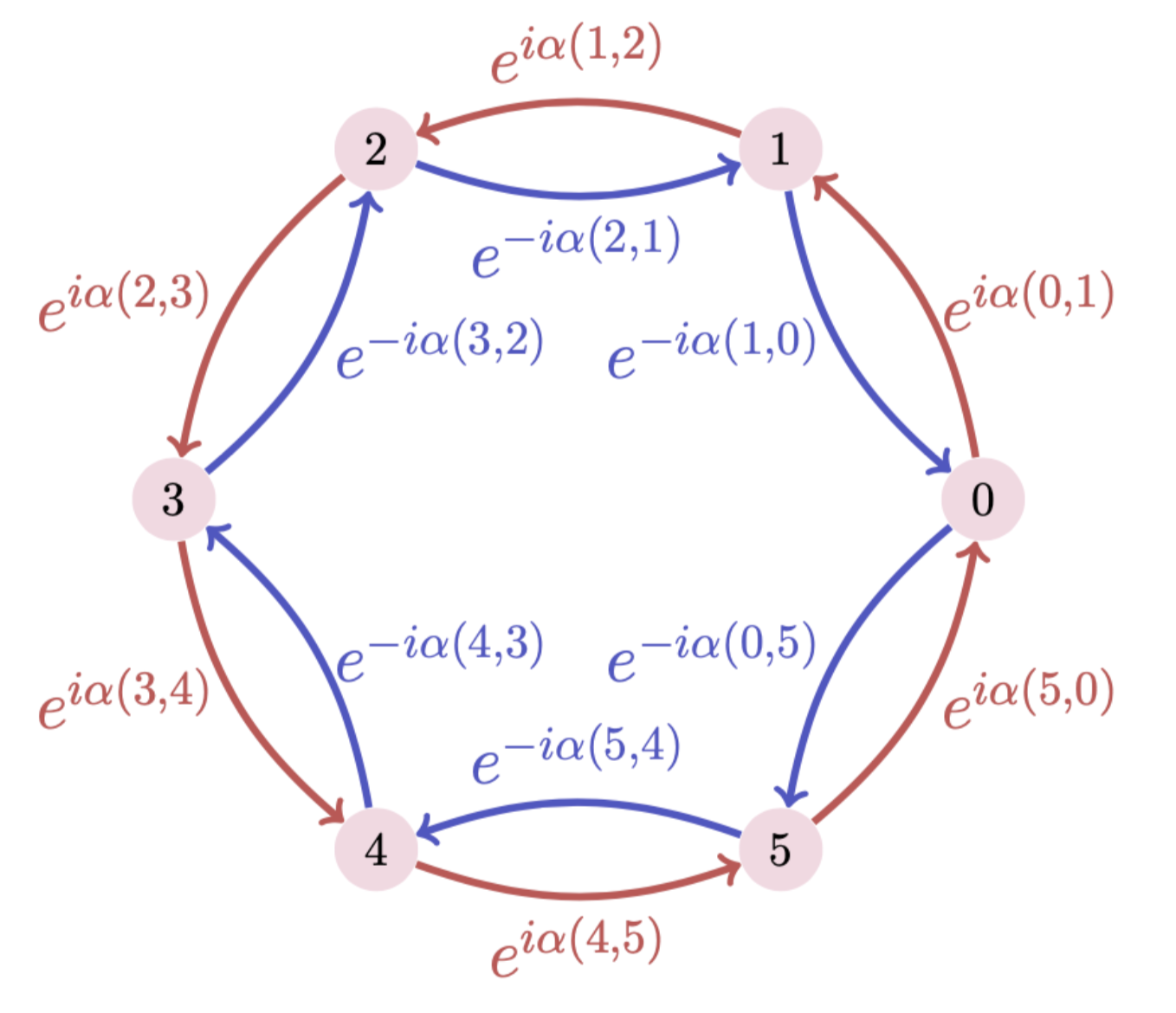}
\caption{Oriented cycle graph with six vertices}
\label{fig:oriented_cycle}
\end{figure}

This model of Quantum Walk can produce new interference patterns and it can helps to find new transport properties. In order to study this behaviour, we consider the initial condition
\begin{equation}\label{eq:initialab}
\ket{\psi(0)} = \ket{0}.
\end{equation}
Figure \ref{fig:dyn_inf_line1} depicts the behaviour of the quantum walk considering over a cycle graph with 26 vertices and the same $\alpha$ for all of the edges, and the initial condition in \eqref{eq:initialab}. This figure illustrates how this parameter $\alpha$ affects the dynamics and, depending on its value, gives a more centralized behaviour or certain tendency to right or left side over the line.
\begin{figure}[!h]
\centering
\includegraphics[trim={1.4cm 0.2cm 4cm 1cm},clip,height=5.0cm]{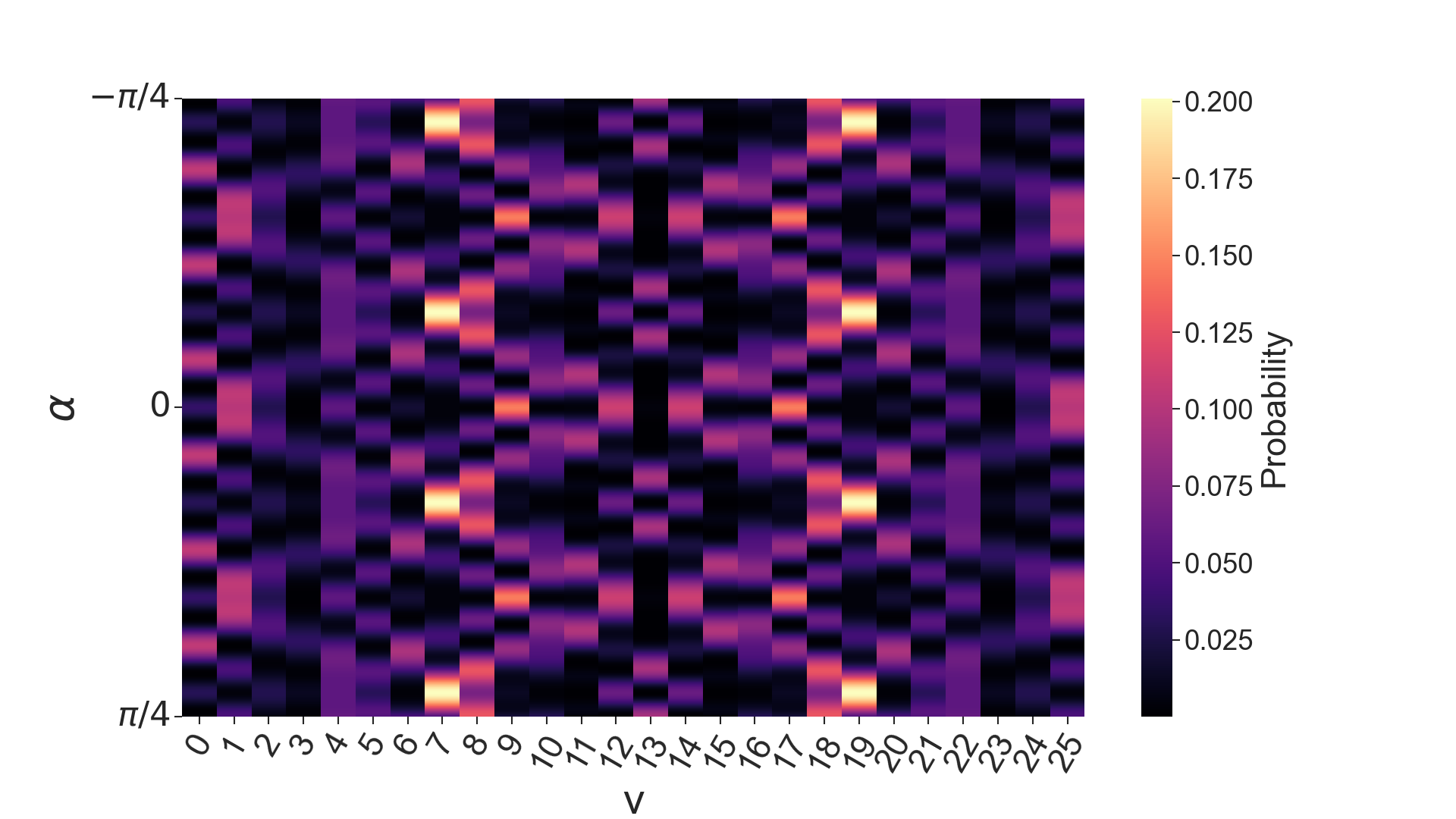}
\caption{Dynamics of a continuous-time quantum walk on a cycle graph with different values of $\alpha$, time equals to $10\pi$ and initial condition $\ket{0}$.}
\label{fig:dyn_inf_line1}
\end{figure}

\subsection{\label{sec:ztrans}Zero Transfer in even cycles}

Given a graph, extensive work has been done in studying when an input state at a given qubit can be transferred to another with maximum (perfect state transfer, see \cite{kay2010perfect,kendon2011perfect} for some surveys) or almost maximum probability (pretty good state transfer, see \cite{kempton2017pretty,banchi2017pretty} for some recent work). The somewhat symmetrical problem of asking when the probability of transfer between two vertices is $0$ seems to have received less attention. See for instance \cite{Zimboras2013}. If this probability is required to be constant equal to $0$ for all times, the phenomenon has been called zero transfer, and was studied in \cite{Sett2019}.

Given a Hamiltonian $H_\alpha$ for a finite graph, as we defined in Section \ref{sec:ctqw2}, we say that zero transfer occurs between vertices $a$ and $b$ if
\begin{equation}
    \abs{\bra{b}\e^{\ii t H_\alpha}\ket{a}}=0,
\end{equation}
for all $t$. The aim of this section is to prove that there is zero transfer between antipodal vertices in cycle graphs with $2m$ vertices if the product of the weights following an orientation of the cycle is equal to $-1$. This is typically achieved, for example, if $k$ of its arcs have been oriented in the same direction with weight $\e^{\ii \alpha}$ where $\alpha = \pi/k$, and the other remaining vertices have been left unaltered. This result strongly generalizes the known examples of zero transfer for when one of the edges is signed with $-1$ (see \cite{Sett2019}).

\begin{figure}[!h]
\centering
\includegraphics[trim={1.4cm 0.2cm 4cm 1cm},clip,height=5.0cm]{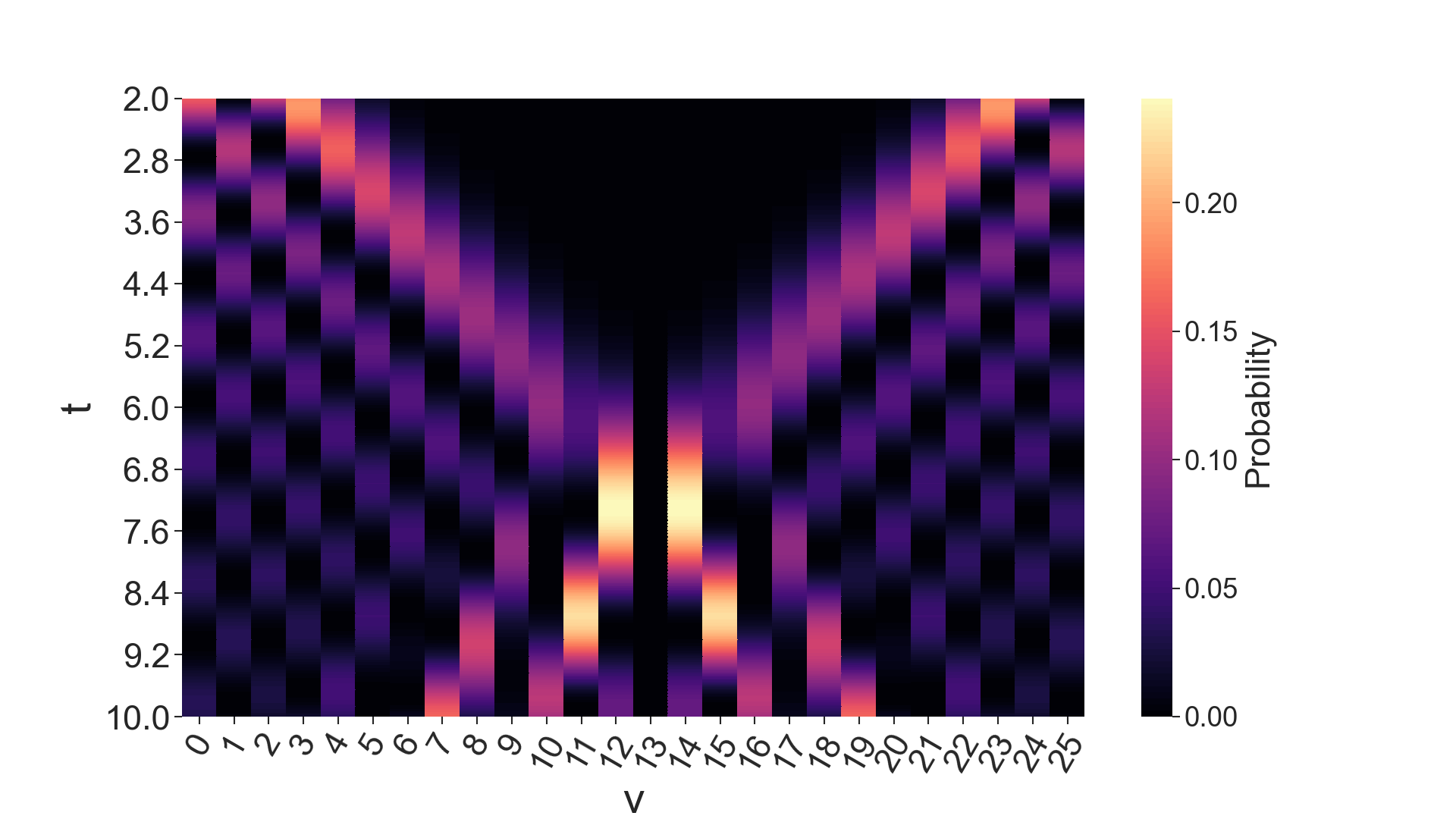}
\caption{Dynamics of a continuous-time quantum walk on a cycle graph with initial condition equals to $\ket{0}$ and $\alpha=\pi/k$.}
\label{fig:dyn_inf_line}
\end{figure}


The following widely known application of the Laplace expansion for the determinant will be  useful to us.
\begin{lemma}\label{lemmatridig}
    Let $T_n$ be a tridiagonal matrix given by
    \begin{equation}
        T_n=\begin{pmatrix}
            a_1&b_1&&&\\
			c_1&a_2&b_2&&\\
			&c_2&\ddots&\ddots&\\
			&&\ddots&\ddots&b_{n-1}\\
			&&&c_{n-1}&a_n
        \end{pmatrix},
    \end{equation}
    and define $T_k$ to be the principal $k \times k$ block of $T_n$ containing its first $k$ rows and columns. Then, denoting $\det(T_k)=f_k$, we have
    \begin{equation}
        f_k=a_kf_{k-1}-c_{k-1}b_{k-1}f_{k-2},
    \end{equation}
    where $f_{-1}=0$ and $f_0=1$.
\end{lemma}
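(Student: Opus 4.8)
The plan is to prove the recurrence by induction on $k$ via a Laplace (cofactor) expansion of $\det(T_k)$ along its last row. The base values $f_{-1}=0$ and $f_0=1$ are conventions, and one checks that they correctly reproduce $f_1 = a_1$ and $f_2 = a_1 a_2 - b_1 c_1$; the entire content of the statement therefore lies in justifying the recurrence step, so I would focus there.

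First I would note that the last row of $T_k$ has only two nonzero entries, namely $c_{k-1}$ in position $(k,k-1)$ and $a_k$ in position $(k,k)$. Expanding the determinant along this row thus produces exactly two terms,
\[
 f_k = a_k\, C_{k,k} + c_{k-1}\, C_{k,k-1},
\]
where $C_{k,k}$ and $C_{k,k-1}$ denote the corresponding cofactors. The diagonal term is immediate: deleting row $k$ and column $k$ of $T_k$ leaves precisely $T_{k-1}$, and the sign $(-1)^{k+k}$ is positive, so $C_{k,k} = f_{k-1}$.

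The only term requiring care is the cofactor $C_{k,k-1}$ of the subdiagonal entry $c_{k-1}$. Deleting row $k$ and column $k-1$ yields a $(k-1)\times(k-1)$ matrix whose final column — inherited from column $k$ of $T_k$ — now contains a single nonzero entry $b_{k-1}$ in its bottom-right corner, because the entry $a_k$ was discarded together with row $k$. A second expansion along that column isolates $b_{k-1}\det(T_{k-2}) = b_{k-1} f_{k-2}$, and combining the sign $(-1)^{k+(k-1)}$ from the first expansion with $(-1)^{(k-1)+(k-1)}$ from the second gives $C_{k,k-1} = -\,b_{k-1} f_{k-2}$. Substituting both cofactors then yields $f_k = a_k f_{k-1} - c_{k-1} b_{k-1} f_{k-2}$, as claimed. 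I expect the main obstacle to be purely a matter of bookkeeping: tracking which rows and columns survive each deletion and accumulating the signs correctly across the two nested expansions. There is no genuine analytic difficulty, which is why the statement is described as elementary.
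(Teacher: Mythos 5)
Your proof is correct: the cofactor expansion along the last row, followed by a second expansion along the last column of the $(k,k-1)$ minor, with the signs combined as you do, yields exactly $f_k = a_k f_{k-1} - c_{k-1}b_{k-1}f_{k-2}$. The paper itself gives no proof, describing the lemma as a widely known application of the Laplace expansion, so your argument simply supplies the details of precisely the approach the paper has in mind.
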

A direct consequence, that will be useful later on, occurs when $a_k=0$ and $b_k=c_k=1$ for all $k$, which leads to
\begin{gather}
    f_{2k}=(-1)^k, \label{eq:detpaths1}\\
    f_{2k+1}=0.\label{eq:detpaths2}
\end{gather}

Before moving on, we recall the definition of the Chebyshev polynomial of first kind, $T_n(x)$, and second kind, $U_n(x)$, as
\begin{definition}
The Chebyshev polynomial of first kind is defined as
\begin{equation}
    T_n(x)=\sum_{k=0}^{\lfloor\frac{n}{2}\rfloor}\binom{n}{2k}(1-x^{-2})^k,
\end{equation}
while the Chebyshev polynomials of second kind is
\begin{equation}
    U_n(x)=\sum_{k=0}^{\lfloor\frac{n}{2}\rfloor}(-1)^k \binom{n-k}{k}(2k)^{n-2k}.
\end{equation}
\end{definition}
Those two polynomials are related by the following known properties (see \cite{Mason2002}):
\begin{gather}\label{chybprop}
    T_n(x)=\frac{1}{2}\bigg(U_n (x)-U_{n-2}(x)\bigg),\\
    T_{2n}(x)=2T^2_n (x)-1.
\end{gather}

We are now ready to prove the following
\begin{theorem}\label{thm:detcycle}
    Let $H_\alpha$ be the adjacency matrix of a cycle with $n = 2m$ vertices, with vertices $v_1,...,v_{n}$, where $v_j$ is adjacent to $v_{j \pm 1}$. Suppose
    \[
    	\sum_{j = 1}^n \alpha(j,j+1) = \pi.
    \]
    Then, the characteristic polynomial $\phi_{H_\alpha}$ is given by
    \begin{equation}
        \phi_{H_\alpha}=\phi_H+4,
    \end{equation}
    where $\phi_H$ is the characteristic polynomial of an undirected cycle with $2m$ vertices. Additionally, $\phi_{H_\alpha}$ can be decomposed as
    \begin{equation}
        \phi_{H_\alpha}=(2T_k(x/2))^2.
    \end{equation}
\end{theorem}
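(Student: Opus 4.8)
The plan is to compute the characteristic polynomial $\phi_{H_\alpha}(x)=\det(xI-H_\alpha)$ directly from the Leibniz (permutation) expansion, exploiting that $xI-H_\alpha$ is tridiagonal except for the two corner entries coming from the arc joining $v_n$ and $v_1$. A permutation $\sigma$ contributes a nonzero term only if every $(i,\sigma(i))$ entry is nonzero, i.e.\ only if $\sigma$ sends each index to a cyclic neighbour or fixes it. On a cycle the only such permutations are the identity, the products of disjoint transpositions of adjacent vertices, and the two full rotations $i\mapsto i\pm 1 \pmod n$. I would group the first two families according to whether they use the corner edge $\{v_1,v_n\}$ or not.

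For the permutations that avoid the corner edge, the contribution is exactly the determinant of the tridiagonal (path) part of $xI-H_\alpha$. Since each symmetric off-diagonal pair multiplies to $(-\e^{-\ii\alpha_j})(-\e^{\ii\alpha_j})=1$, Lemma \ref{lemmatridig} gives the recurrence $f_k=xf_{k-1}-f_{k-2}$ with $f_0=1$, $f_1=x$, so $f_k=U_k(x/2)$. The permutations that do use the corner edge must transpose $v_1$ and $v_n$ and then permute $v_2,\dots,v_{n-1}$ among themselves along the sub-path; their total contribution is $(-1)\cdot(-\e^{\ii\alpha_n})(-\e^{-\ii\alpha_n})\cdot f_{n-2}=-U_{n-2}(x/2)$. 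This already reproduces two of the three terms.

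The hypothesis $\sum_j\alpha(j,j+1)=\pi$ enters only through the two rotations. Traversing the cycle once picks up the product of all $n$ off-diagonal entries on one side, namely $(-1)^n\,\e^{\pm\ii\sum_j\alpha_j}=(-1)^n\,\e^{\pm\ii\pi}=(-1)^{n+1}$; multiplying by the sign $(-1)^{n-1}$ of an $n$-cycle, each rotation contributes $+1$, for a total of $+2$. Hence
\[
\phi_{H_\alpha}=U_n(x/2)-U_{n-2}(x/2)+2,
\]
and the first Chebyshev identity \eqref{chybprop} collapses this to $\phi_{H_\alpha}=2T_n(x/2)+2$. Running the same expansion for the undirected cycle $H_0$ changes nothing except that the rotations now carry $(-1)^n$ (no phase) and so contribute $-2$, giving $\phi_H=2T_n(x/2)-2$; subtracting yields $\phi_{H_\alpha}=\phi_H+4$. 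Finally, with $n=2m$ the doubling identity $T_{2m}(y)=2T_m(y)^2-1$ turns $2T_{2m}(x/2)+2$ into $4T_m(x/2)^2=(2T_m(x/2))^2$, which is the claimed factorization (so here $k=m$).

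I expect the only delicate point to be the sign bookkeeping of the two rotation terms, where the flux $\e^{\ii\pi}=-1$ must be combined correctly with the permutation sign $(-1)^{n-1}$ and the factor $(-1)^n$ coming from the $n$ entries $-\e^{\pm\ii\alpha_j}$; it is precisely this sign that flips the rotation contribution from $-2$ (undirected) to $+2$ (directed) and thereby accounts for the ``$+4$''. A conceptually cleaner alternative, which I would mention, is to first apply a diagonal unitary conjugation as in Lemma \ref{lemmatree}: on a cycle this cannot remove all phases but can concentrate them on a single edge whose gauge-invariant weight is the holonomy $\prod_j\e^{\ii\alpha_j}=-1$. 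This realizes $H_\alpha$ as cospectral with the cycle carrying exactly one edge signed $-1$, reducing the statement to the known single-signed-edge case and making transparent why only $\sum_j\alpha(j,j+1)\bmod 2\pi$ matters.
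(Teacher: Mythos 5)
Your proof is correct, and the sign bookkeeping you flag as the delicate point does check out: each rotation contributes $(-1)^{n-1}\cdot(-1)^{n}\cdot\e^{\pm\ii\pi}=+1$ under the flux hypothesis versus $-1$ in the undirected case, and this swing of $2+2$ is exactly the $+4$. Your route differs from the paper's in organization, though not in the underlying insight that only terms traversing the whole cycle feel the phases. The paper argues in two stages: it first expands $\phi_{H_\alpha}$ as a sum over principal minors and invokes its Lemma \ref{lemmatree} (diagonal similarity of weighted paths to unweighted ones) to conclude that $\phi_{H_\alpha}$ and $\phi_H$ can only differ in the full-determinant term, i.e.\ $\phi_{H_\alpha}=\phi_H-\det(H)+\det(H_\alpha)$; it then computes both determinants by the Laplace expansion \eqref{eq:H}, where the sole discrepancy is $\prod_j (H_\alpha)_{j\,j+1}=-1$ versus $+1$. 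You instead classify, in one pass, all permutations contributing to the Leibniz expansion of $\det(xI-H_\alpha)$ --- products of disjoint adjacent transpositions, split by whether they use the corner edge, plus the two rotations --- and read off the closed forms $\phi_{H_\alpha}=2T_n(x/2)+2$ and $\phi_H=2T_n(x/2)-2$ directly from Lemma \ref{lemmatridig} and \eqref{chybprop}. What this buys is self-containedness: you never need Lemma \ref{lemmatree} (only the observation that each symmetric pair of entries multiplies to $1$), and the Chebyshev factorization $(2T_m(x/2))^2$ emerges from the same computation rather than being assembled afterwards from $\phi_H=\phi_{P_n}-\phi_{P_{n-2}}-2$; you also correctly identify the $k$ in the statement as $m$. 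Your closing remark --- gauging the phases onto a single edge carrying the holonomy $\prod_j\e^{\ii\alpha_j}=-1$, thereby reducing to the known signed-edge case --- is the cycle analogue of Lemma \ref{lemmatree} and makes transparent why only the total flux modulo $2\pi$ matters; the paper alludes to this circle of ideas only through the reference to Kubota \textit{et al.}, so that observation is a genuine conceptual addition.
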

\begin{proof}
    The Leibniz formula for the determinant gives that
    \begin{equation}
        \phi_{H_\alpha}=\sum_{\sigma\in\mathcal{S}}(-1)^{\textbf{sgn}(\sigma)}\prod_{i=1}^{2k}(xI-H_\alpha )_{i\sigma(i)},
    \end{equation}
    where the sum corresponds to all permutations $\sigma$ of the set $\{1,2,\dots,2m\}$. Notice that the permutation will only give a non-zero entry of $(xI-H_\alpha)$ if it maps $i$ to a neighbour. Therefore, we can rewrite
    \begin{equation}
        \phi_{H_\alpha} = \sum_{D \subseteq V(G)} x^{\abs{D}}(-1)^{n-\abs{D}}\det(H_{\alpha}\setminus D),
    \end{equation}
    where the sum runs over all subsets $D$ of the vertices of the graph, and $(H_{\alpha}\setminus D)$ denotes the matrix $H_{\alpha}$ with rows and columns corresponding to $D$ removed.
    
    It is easy to see that subgraphs obtained for $\abs{D}< n$ are all paths or multiple disconnected paths where some edges might have been weighted. As we saw on Lemma \ref{lemmatree}, all these possibly weighted directed paths are similar (via a diagonal matrix) to their undirected unweighted counterparts. Therefore
    \begin{equation}
        \phi_{H_\alpha}=\phi_{H}-\det(H)+\det(H_{\alpha}).
    \end{equation}
    
    To compute the determinant of $H_{\alpha}$, it suffices to observe the following decomposition, which follows the Laplace expansion (see for instance \cite{Godsil1993}):
    
    \begin{align}
    	\det(H_\alpha)= & -(H_\alpha)_{12}(H_\alpha)_{21} \det((H_\alpha) \setminus \{1,2\})  \nonumber \\ 
    	& + (-1)^{n+1}\cdot 2\cdot  \prod_{j = 1}^n (H_\alpha)_{j\, j+1} \nonumber \\ 
    	& -(H_\alpha)_{1n}(H_\alpha)_{n1} \det((H_\alpha) \setminus \{1,n\}) \label{eq:H}
    \end{align}
    
    From the given weights, Lemma \ref{lemmatree} and from \eqref{eq:detpaths1}, we have
    
    \begin{itemize}
		\item $(H_\alpha)_{12}(H_\alpha)_{21} = (H_\alpha)_{1n}(H_\alpha)_{n1} = 1$.
		\item $\det((H_\alpha) \setminus \{1,2\}) = \det((H_\alpha) \setminus \{1,n\}) = (-1)^m$.
		\item $\prod_{j} (H_\alpha)_{j\, j+1} = -1$.
    \end{itemize}
    
    If instead we were computing the determinant of $H$, the only difference would have been that $\prod_{j} (H)_{j\, j+1} = 1$. Therefore
    \begin{equation}
        \phi_{H_{\alpha}}=\phi_{H}+4.
    \end{equation}
    
    Also following from \eqref{eq:H}, we have
    \begin{equation}
        \phi_H=\phi_{P_{n}}-\phi_{P_{n-2}}-2,
    \end{equation}
    where $\phi_{P_{n}}$ is the characteristic polynomial of the adjacency matrix of the path graph with $n = 2m$ vertices, hence, from \eqref{eq:detpaths1}, we have
    \begin{equation}
        \det(H)=\begin{cases}
        0&\text{if $m$ is even},\\
        -4&\text{if $m$ is odd}
        \end{cases}.
    \end{equation}
    Also, $\phi_{P_{n}}$ can be defined in terms of the Chebyshev polynomials as
    \begin{equation}
        \phi_{P_{n}}=U_{n}(x/2).
    \end{equation}
    Using that in the expression for $\phi_{H_\alpha}$ and the properties presented in the definition of the Chebyshev polynomials we have
    \begin{equation}
        \phi_{H_\alpha}=(2T_k(x/2))^2.
    \end{equation}
\end{proof}


A direct consequence of this theorem is
\begin{corollary}
Let $H_\alpha$ be the adjacency matrix of a cycle with $n = 2m$ vertices where $\alpha$ satisfies the hypothesis in Theorem \ref{thm:detcycle}. Then there is zero transfer for any time $t$ between any vertex $a \in \{1,...,n\}$ and its antipodal vertex $(a+m) \pmod n$.
\end{corollary}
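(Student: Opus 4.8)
The plan is to connect the zero transfer phenomenon to the spectral structure established in Theorem \ref{thm:detcycle}, exploiting the fact that the cycle $H_\alpha$ is a circulant-type matrix whose eigenvectors are explicitly known. First I would recall that zero transfer between $a$ and $b = (a+m) \bmod n$ means $\bra{b}\e^{\ii t H_\alpha}\ket{a} = 0$ for all $t$. Writing the spectral decomposition $H_\alpha = \sum_\lambda \lambda E_\lambda$, where $E_\lambda$ are the orthogonal eigenprojectors, we have $\bra{b}\e^{\ii t H_\alpha}\ket{a} = \sum_\lambda \e^{\ii t \lambda}\bra{b}E_\lambda\ket{a}$. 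Since the functions $\e^{\ii t \lambda}$ are linearly independent over distinct reals $\lambda$, this expression vanishes for all $t$ if and only if $\bra{b}E_\lambda\ket{a} = 0$ for every eigenvalue $\lambda$. So the task reduces to showing that every eigenprojector has a zero $(a,b)$-entry between antipodal vertices.

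The key leverage comes from the factorization $\phi_{H_\alpha} = (2T_k(x/2))^2$ proved in the theorem. Because each root of $\phi_{H_\alpha}$ appears with \emph{even} multiplicity (the polynomial is a perfect square), every eigenvalue of $H_\alpha$ has multiplicity exactly $2$ (generically), so each eigenspace is two-dimensional. I would then use the symmetry of the oriented cycle: after the diagonal gauge transformation $D^\dagger H_\alpha D$ of Lemma \ref{lemmatree} applied along a spanning path, $H_\alpha$ is unitarily equivalent (via a diagonal $D$ preserving the position basis up to phases, hence preserving all transition probabilities) to a standard weighted cycle whose antipodal transfer we can analyze directly. The cleanest route is to observe that the antipodal map $v_j \mapsto v_{j+m}$ is a graph automorphism commuting with $H_\alpha$ up to the phase constraint $\sum \alpha = \pi$; this involution $R$ satisfies $R H_\alpha R^{-1} = H_\alpha$, and on each two-dimensional eigenspace $R$ acts as an involution, splitting it into $\pm 1$ eigenvectors. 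One then shows that the matrix entry $\bra{b} E_\lambda \ket{a}$, with $b = Ra$, is forced to vanish because $E_\lambda$ intertwines with $R$ in a way that makes the antipodal entry an eigenvalue-weighted average that cancels.

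Concretely, I would build the explicit eigenvectors. For the gauge-equivalent weighted cycle, the eigenvectors take the Fourier form $\ket{\psi_\theta}$ with entries proportional to $\e^{\ii j \theta}$, and the condition $\sum_j \alpha(j,j+1) = \pi$ forces the allowed quasi-momenta $\theta$ to be the half-integer multiples $\theta = (2\ell+1)\pi/n$ rather than the integer multiples appearing in the undirected case. The eigenvalue is $2\cos\theta$, and since $\cos((2\ell+1)\pi/n)$ and $\cos((2(n-1-\ell)+1)\pi/n)$ coincide, eigenvalues pair up, confirming the double multiplicity. The antipodal entry of the eigenprojector then involves a factor $\e^{\ii m \theta} = \e^{\ii m (2\ell+1)\pi/n} = \e^{\ii(2\ell+1)\pi/2} = \pm\ii$, and summing the two paired contributions over an eigenspace yields exact cancellation in the transition amplitude. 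The main obstacle I anticipate is handling the gauge transformation carefully: Lemma \ref{lemmatree} is stated for trees, not cycles, so I cannot gauge away all the phases of the cycle simultaneously — the residual holonomy $\e^{\ii \sum \alpha} = \e^{\ii\pi} = -1$ around the loop is exactly what shifts the quasi-momenta by half a unit and produces the cancellation. I would therefore gauge the cycle to a \emph{signed} cycle with a single edge weight $-1$, for which the shifted-Fourier eigenvectors are standard, and then verify the antipodal cancellation there, invoking that $|D_{aa}| = 1$ so transition probabilities are gauge-invariant.
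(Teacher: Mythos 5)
Your proof is correct, but it follows a genuinely different route from the paper's. The paper never diagonalizes $H_\alpha$: it observes that the perfect-square factorization $\phi_{H_\alpha}=(2T_m(x/2))^2$ from Theorem \ref{thm:detcycle} forces $H_\alpha$ to have at most $m$ distinct eigenvalues, so the minimal polynomial of $H_\alpha$ relative to $\ket{a}$ has degree at most $m$; since the antipodal vertex $a+m$ is at combinatorial distance $m$ from $a$, the entries $\bra{a+m}H_\alpha^k\ket{a}$ vanish for all $k\leq m-1$ (there are no walks that short), hence every power of $H_\alpha$ --- and therefore every spectral idempotent, each being a polynomial in $H_\alpha$ --- has vanishing $(a+m,a)$ entry. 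Your argument instead gauges the cycle to the signed cycle with a single $-1$ edge (holonomy $\e^{\ii\sum\alpha}=-1$), writes down the antiperiodic Fourier eigenvectors with quasi-momenta $\theta_\ell=(2\ell+1)\pi/n$ and eigenvalues $2\cos\theta_\ell$, and checks directly that each rank-two eigenprojector has antipodal entry proportional to $\cos(m\theta_\ell)=\cos((2\ell+1)\pi/2)=0$, transition probabilities being invariant under the diagonal gauge. Both are complete, and they buy different things: your computation is self-contained (it does not even need Theorem \ref{thm:detcycle}, and it exhibits the spectrum and the multiplicity-two structure explicitly), but it is wedded to the circulant structure of the cycle; the paper's argument is blind to eigenvectors and rests only on the inequality (number of distinct eigenvalues) $\leq$ (distance between the two vertices), a principle that transports to other graphs and which the authors point to as the route toward a general characterization --- note that this inequality can never hold for a connected undirected graph, which is precisely why the phenomenon needs complex weights. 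One caveat: the claim in your middle paragraph that the antipodal rotation $R$ satisfies $RH_\alpha R^{-1}=H_\alpha$ is false for general $\alpha$ subject only to $\sum_j\alpha(j,j+1)=\pi$ (rotation permutes the arc weights; only rotation composed with a further gauge is a symmetry), but your final, concrete argument never uses it, so the proof stands.
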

\begin{proof}
    Take a vertex $a$, and consider its indicator vector $\ket a$. The minimal polynomial of $H_\alpha$ on $\ket a$ is the monic polynomial $p(x)$ of smallest degree so that $p(H_\alpha) \ket a = 0$. It is well known that $p(x)$ divides the minimal polynomial of the matrix $H_\alpha$, and this latter polynomial has degree equal to the number of distinct eigenvalues of $H_\alpha$. From Theorem \ref{thm:detcycle}, this number is at most $m$. Therefore, $H_{\alpha}^\ell \ket a$, for $\ell \geq m$, is a linear combination of $H_{\alpha}^k \ket a$ for $0 \leq k \leq m-1$. Because the combinatorial distance between $a$ and $(a+m)$ is $m$, it follows that $\bra {a+m} H_{\alpha}^k \ket a = 0 $ for all $0 \leq k \leq m-1$, and therefore $\bra {a+m} H_{\alpha}^\ell \ket a = 0$ for all $\ell \geq 0$. This immediately implies that for all idempotents $E_r$ in the spectral decomposition of $H_\alpha$, this corresponding entry related to $a$ and $(a+m)$ is $0$, and therefore
    \begin{equation}
        \abs{\bra{a+m}\e^{\ii t H_{\alpha}}\ket{a}}^2=\abs{\sum_{r}\e^{\ii t \theta_r}\bra{a+m}E_r\ket{a}}^2=0.
    \end{equation}
\end{proof}


\begin{figure}[!b]
\centering
\includegraphics[width=10cm]{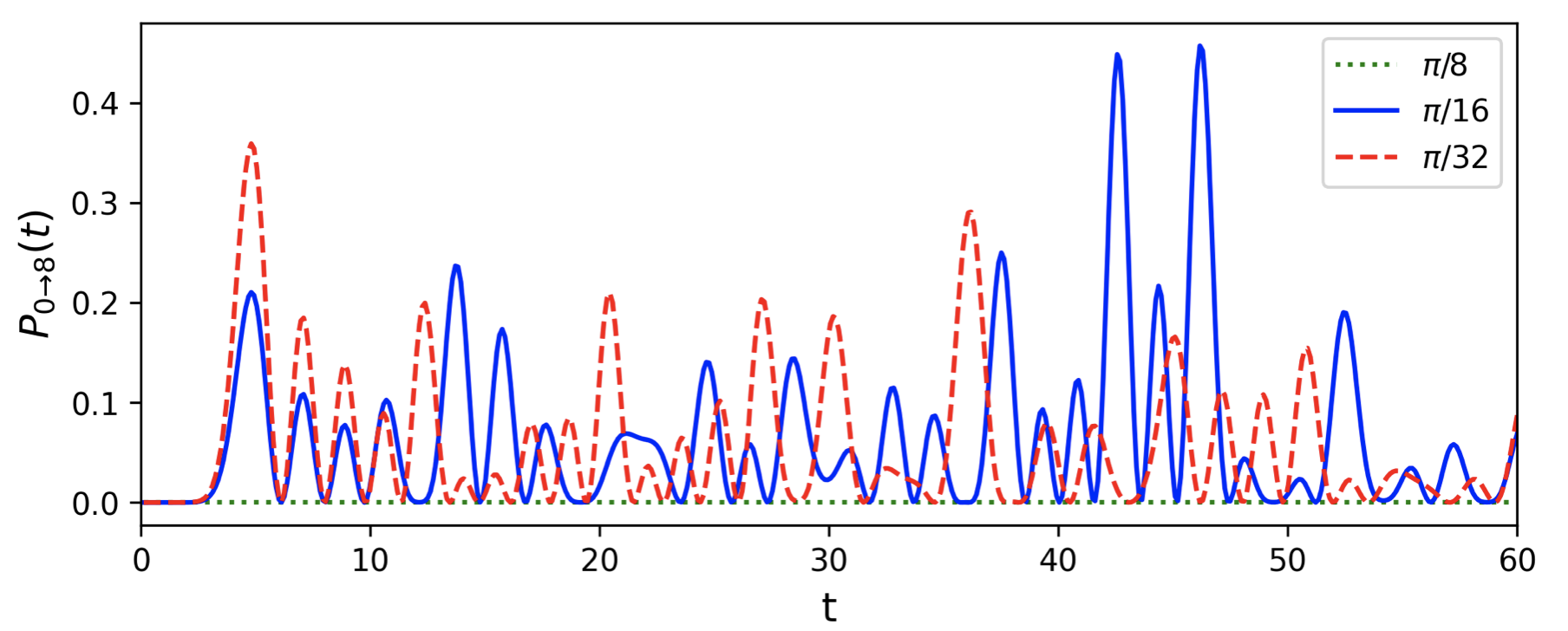}
\caption{$C_{16}$ with weights $\alpha=\pi/8$, $\alpha=\pi/16$, $\alpha=\pi/32$.}
\label{fig:ztrans}
\end{figure}

In Figure \ref{fig:ztrans} we display the probability of transfer between antipodal vertices in $C_{10}$ when half of their arcs have received the indicated weights, while the other half is maintained with weight $1$. To show our results, the values of $\alpha$ are changed and it is possible to see that zero transfer disappears if $\alpha\neq\pi/m$.

Kubota \textit{et al.}\cite{Kubota2021} computed the characteristic polynomial of cycles that have received constant weights on some of their arcs, but they did not showed this connection to zero transfer.

\section{Conclusion}

We have shown how to obtain zero transfer in more general directed cycles than previously observed in the literature. As zero transfer is an interesting and desirable phenomenon, we believe that a desirable research target is a full characterization of graphs that admit it. Our result gives a possible path for its complete characterization and the connection with the spectral properties of the adjacency matrix of the graph.

We have also displayed interesting examples of directed complete graphs whose transfer probability between two vertices approaches one, as opposed to their undirected counterparts. This suggests that further investigation on the possible features directed arcs bring to quantum walks is imperative.

Finally, we displayed the viability of the proposed formalization as a physical realizable Hamiltonian with only two-body interactions. 

\section{Acknowledgements}

This study was financed by the Coordenação de Aperfeiçoamento de Pessoal de Nível Superior – Brasil (CAPES) – Finance Code 001

\bibliographystyle{acm}
\bibliography{bibliography}

\end{document}